\title[Learning for Sys-ID of NDAE-modeled Power Systems]{Learning for System Identification of NDAE-modeled Power Systems}
\newtheorem{asmp}{Assumption}
\newtheorem{myprs}{Proposition}
\author{%
 \Name{Wenjie Mei} \Email{wenjie.mei@vanderbilt.edu}\\
 \addr Department of Civil and Environmental Engineering, Vanderbilt University, 2201 West End Avenue, Nashville, TN 37235, USA
 \AND
 \Name{Muhammad Nadeem} \Email{muhammad.nadeem@vanderbilt.edu}\\
 \addr Department of Civil and Environmental Engineering, Vanderbilt University, 2201 West End Avenue, Nashville, TN 37235, USA
 \AND
 \Name{MirSaleh Bahavarnia} \Email{mirsaleh.bahavarnia@vanderbilt.edu}\\
 \addr Department of Civil and Environmental Engineering, Vanderbilt University, 2201 West End Avenue, Nashville, TN 37235, USA
 \AND
 \Name{Ahmad F. Taha} \Email{ahmad.taha@vanderbilt.edu}\\
  \addr Department of Civil and Environmental Engineering, Department of Electrical and Computer Engineering, Vanderbilt University, 2201 West End Avenue, Nashville, TN 37235, USA
}
\begin{document}

\maketitle

\begin{abstract}%
System identification through learning approaches is emerging as a promising strategy for understanding and simulating dynamical systems, which nevertheless faces considerable difficulty when confronted with power systems modeled by differential-algebraic equations (DAEs). This paper introduces a neural network (NN) framework for effectively learning and simulating solution trajectories of DAEs. The proposed framework leverages the synergy between Implicit Runge-Kutta (IRK) time-stepping schemes tailored for DAEs and NNs (including a differential NN (DNN)). The framework enforces an NN to cooperate with the algebraic equation of DAEs as hard constraints and is suitable for the identification of the ordinary differential equation (ODE)-modeled dynamic equation of DAEs using an existing penalty-based algorithm. Finally, the paper demonstrates the efficacy and precision of the proposed NN through the identification and simulation of solution trajectories for the considered DAE-modeled power system.
\end{abstract}

\begin{keywords}%
  Differential-algebraic equations (DAEs), neural network (NN), power systems, differential NN (DNN).
\end{keywords}

\section{Introduction}
System identification for power networks is essential in the domain of power systems for many reasons: First, accurate models are necessary for effective design, analysis, and control, offering insights into system behaviors. Identifying model parameters allows engineers to further study power dynamics performance and stability, for example. Second, identification can produce predictive models for planning or optimizing the behaviors of power systems, which is important in advanced power models with renewables and emerging technologies. The identified power models obtained through identification techniques facilitate control strategies for enhancing the efficiency and resilience of the controlled systems. 

Differential-algebraic equations (DAEs) are essential in power systems due to their ability to model dynamic behaviors and interconnections. They offer a solid mathematical framework capturing differential equations for dynamic responses and algebraic constraints between causal components. For instance, DAEs enable a well-formed representation of interconnected power system components, facilitating the modeling of generators and transformers. This advantage is beneficial as power grids evolve with renewable energy and advanced integration. 

\subsection{Existing works}
The literature pays considerably less attention to system identification challenges in DAEs compared to, for instance, nonlinear ODE models (refer to \cite{schon2015sequential} and \cite{sjoberg1995nonlinear} for more details). Existing methods for DAEs are primarily designed for scenarios of addressing numerical aspects within the optimization problem and involving environments with or without disturbances, under the condition that all inputs are known. In such cases, these methods typically have straightforward formulations, with a focus on addressing numerical aspects within the optimization problem (see, for instance, \cite{esposito2000global} and \cite{bock2007numerical}, along with relevant references). Also, there exist some works studying the identification of DAEs with process disturbances, \emph{e.g.}, in \cite{abdalmoaty2021identification}. However, the modeling of DAEs frequently encounters challenges in accurately representing nonlinearities through traditional methods. Neural networks (NNs), famous for their usefulness in capturing complex nonlinear interconnections, exhibit high performance, indicating they are ideal for accurately representing the intricate dynamics in DAE-modeled power systems. 

NNs have extensively been investigated for learning dynamical systems that are described by ordinary differential equations (ODEs), considerably in nonlinear ODEs (NODEs), providing us with
an effective alternative to conventional numerical methods with high costs \cite{lu2021deepxde,yazdani2020systems,meade1994solution}.
Although NNs have succeeded in learning solution trajectories or simulating dynamic behaviors for ODEs \cite{chang2019antisymmetricrnn,fang1996stability}, developing an NN-based framework for learning and simulating solution trajectories in the context of nonlinear DAEs (NDAEs) remains as an unresolved challenge. The difficulty arises from the fact that DAEs exhibit infinite stiffness property \cite{knorrenschild1992differential,kim2021stiff}, which can result in gradient pathologies \cite{wang2021understanding} and optimization problems with poor conditioning. These issues often lead to the failure of many trainings. To that end, observing the general forms of NDAEs (especially the dynamic system), in this paper, we introduce a differential NN (DNN) \cite{chen2018neural}, which is a continuous variant of an artificial NN (ANN). DNNs incorporate a feedback element, imparting a memory effect, enabling them to operate with the historical data of a process---a feature not feasible for static algorithms. The significant attribute of a DNN lies in its ability to identify unknown systems by incorporating established ODEs.

\subsection{Main Contributions}

The main contributions of the current paper can be summarized as follows:

\begin{itemize}
    \item We present an NN scheme for identifying the parameters of an NDAE-modeled power system in which a DNN is included to approximate the dynamic equation of the system. Our approach can also be well-adapted to the general NDAEs with an index of $1$, which is the inherent physical property of the considered power model.
    \item To guarantee that the proposed NN architecture is effective, under some standard assumptions (imposed on nonlinearities), we further analyze the error dynamic between the true system and the proposed DNN and give the conditions for bounding the identification error at the infinite-time horizon. This is demonstrated by a numerical simulation. We also illustrate that the imposed conditions for identification error analysis are milder than the existing ones in the literature.
\end{itemize}

%Moreover, DAEs play a crucial role in simulation and analysis, aiding in the prediction and evaluation of power system performance under various conditions. This capability is vital for system design, optimization, and the development of robust control strategies to enhance power grid stability and reliability.

\subsection{Paper organization}

The rest of this paper is organized as follows: Section~\ref{sec:problem_state} presents the considered power system and an overview of the system identification problem. In Section~\ref{sec:pro_method}, we first give the details of the proposed NN scheme and the numerical methods for training and testing and then present the boundedness conditions for the identification errors. To illustrate the efficacy of the identification method, we present a case study of the considered DAE-modeled power system in Section~\ref{sec:Example}. The paper summary and future research directions in Section~\ref{sec:conclusion} conclude the paper. The used notation is provided next:
%\vspace{-1em}

\subsection*{Notation}
The set of real numbers (respectively non-negative real numbers) is
  denoted by $\mathbb{R}$ (respectively $\mathbb{R}_{\geq 0}$). The symbol $\rVert \cdot \rVert$ denotes the Euclidean norm on
  $\mathbb{R}^{n}$ (and the induced matrix norm $\rVert A \rVert$ for a matrix
  $A\in\mathbb{R}^{m\times n}$). We use 
  $I_{n}$ to denote the $n\times n$
  identity matrix. The minimum eigenvalue of a symmetric matrix $P$ is denoted by $\lambda_{\min}(P)$. To represent the positive definiteness and negative semi-definiteness, we utilize $\succ 0$ and $\preceq 0$, respectively. For $t_{1}$, $t_{2}\in \mathbb{R}$, with $t_{1}<t_{2}$, we
  denote by $C^{1}_{n}([t_{1},t_{2}])$ the Banach space of continuously differentiable
  functions $\psi:[t_{1},t_{2}] \to\mathbb{R}^{n}$ with the norm
  $\Vert\psi\Vert_{[t_{1},t_{2}]}=\sup_{ r \in [t_1,t_2]} \|\psi(r)\| + \sup_{ r \in [t_1,t_2]}\|\frac{d \psi(r)}{d r}\| < +\infty$. For brevity, we denote a time-dependent signal $s(t)$ by $s$ wherever needed (\emph{e.g.}, $x_d$ instead of $x_d(t)$).
  
  % We use  $I_{n}$ to denote the $n\times n$ identity matrix. The symbol $\otimes$ denotes the Kronecker product. 
  
  %For a square matrix $A$ with real eigenvalues, its maximum eigenvalue is denoted by $\lambda_{\text{max}}(A)$.

\iffalse
%\section{Preliminaries}
%This section gives a universal overview of a general NDAE and its associated MOR techniques. 
%Consider an NDAE system  %\cite{Kolmanovsky_Nosov1986}
%\begin{subequations} \label{eq:general_DAE}
%\begin{align}
%\begin{split} \label{eq:general_DAE_1}
%\dot{x}(t) &= f(x(t),u(t)),\quad t \in [t_0,T], \end{split}\\
%\begin{split}  \label{eq:general_DAE_2}
%0 &= g(x(t),u(t)),
%\end{split}
%\end{align}
%\end{subequations}
%where $x(t)\in\mathbb{R}^{n}$ is the state vector, $x(t_0) = x_0$; $u(t)\in\mathbb{R}^{m}$ is the control input,
%$u\in C^{1}_{m}([t_0,T])$;
%the function $f: \mathbb{R}^n \times \mathbb{R}^{m} \to\mathbb{R}^{n}$ is locally Lipschitz, the function $g: \mathbb{R}^n \times \mathbb{R}^{m} \to\mathbb{R}^{p}$ is 
%continuously  differentiable, and these properties ensure that the corresponding solution
%of the NDAE system~\eqref{eq:general_DAE} exists.

%Differentiating both sides of the equation ~\eqref{eq:general_DAE_2} with respect to time yields the following equation:
%\[
%0 = \frac{\partial g(x,u)}{\partial x} \dot{x} + \frac{\partial g(x,u)}{\partial u} \dot{u}.
%\] 
%If $\frac{\partial g(x,u)}{\partial u}$ has full column rank, then we obtain:
%\begin{eqnarray}
%\dot{x} & = & f(x,u), \nonumber \\ 
%\dot{u} &=  & - \left[\Big(\frac{\partial g(x,u)}{\partial u}\Big)^\top \frac{\partial g(x,u)}{\partial u}\right]^{-1} \label{eq:DAE_nonsingular}  \\ 
%& & \times \Big(\frac{\partial g(x,u)}{\partial u}\Big)^\top  \Big(\frac{\partial g(x,u)}{\partial x}\Big) f(x,u). \nonumber 
%\end{eqnarray}
%Let $y:=[x^\top u^\top]^\top$, then~\eqref{eq:DAE_nonsingular} reduces to the following general NODE form:
%\begin{equation}
%\dot{y}(t) = r(y(t)),
%\end{equation} 
%where $r:\mathbb{R}^{m+n} \to \mathbb{R}^{m+n}$. 
%In case $\frac{\partial g(x,u)}{\partial u}$ does not have full column rank, an alternative option is to consider the second-time differentiation of equation~\eqref{eq:general_DAE_2}. By repeating this procedure, we determine the minimum number of required differentiations, referred to as the \emph{index} of a DAE system. When the index is equal to $k$, it implies that the input $u$ must be at least $k$ times continuously differentiable.

%In this paper, we deal with the MOR problem for~\eqref{eq:general_DAE} and  show that the considered ROM for~\eqref{eq:general_DAE} can be expressed in the form of the following general NODE: 
%\begin{equation} \label{eq:general_ODE}
%\dot{z}(t)= d(z(t)) 
%\end{equation}
%under a projection, for example, $y=Vz$ (a linear projection), where $V$ is a linear operator, which can be represented by a matrix; $z \in \mathbb{R}^q$ is the reduced state; the function $d:\mathbb{R}^{q}\to \mathbb{R}^{q}$, and $q << n+m$. To estimate the specific expression of the function $d$, we employ two distinct MOR methods: \emph{linear projection-based} MOR \cite{schilders2008model} and \emph{trajectory piecewise-linear (TPWL)} MOR approach \cite{rewienski2006model}.

%To assess the effectiveness of MORs, it is essential to analyze the error between the state recovered from equation \eqref{eq:general_ODE} and that obtained from equation \eqref{eq:ROM_general}.
%To that end, we can rewrite the ROM~\eqref{eq:general_ODE} as $V\dot{z}(t)= \tilde{d}(Vz(t))$ and let $\hat{x} := Vz$ stand for the recovered state, and then we can examine the MOR performance by studying the behavior  of the error system
%\[
%\dot{e} = r(y) - \tilde{d}(\hat{x}), 
%\] 
%where $e := y - \hat{x}$ and $\tilde{d}(Vz) := Vd(z)$ for all $z$. 
\fi

% To evaluate the usefulness and effectiveness of the MOR methods (\emph{linear projection-based} and \emph{trajectory piecewise-linear (TPWL) } approaches, which will be introduced later) in power systems, we will apply these techniques to two modified IEEE test system benchmarks: the IEEE 9-bus and 39-bus test systems.  By applying the MOR approaches to these power system benchmarks, we aim to obtain ROMs that can efficiently capture the system dynamics while significantly reducing the overall state vector dimension. To that end, 
\section{Problem Statement} \label{sec:problem_state}

In this paper, we consider the system identification problem for the standard $4^{\text{th}}$-order power system model \cite{sauer2017power, nadeem2022dynamic}, which is represented via the following set of equations:
% \begin{subequations}
% 		\begin{align}
% 		\textit{nonlinear generator ODEs} \;\;\;\;\;	\dot{{ x}}_d &= { A}_d{x}_d +  {f}_d\left({\m x}_d,{\m x}_a\right) + {\m B}_d \textcolor{blue}{\m u}\\
% 		\textit{nonlinear power flow} \;\;\;\;\;\;	\m 0 &= {\m A}_a{\m x_a} + {\m f}_a\left({\m x}_d,{\m x}_a\right) + {\m {B}_{a}} \textcolor{red}{{{\m w}} }
% 		\end{align}
% \end{subequations}
\begin{subequations} \label{eq:main_power_sys}
\begin{align}
\begin{split} \label{main_DAE_system_1}
\textit{Generator ODEs}: \;\;\;\;\; E_d \dot{x}_d &= A_d x_d + C_d f(x_d,x_a) +  B u + hw_0, \end{split}\\
\begin{split}  \label{main_DAE_system_2}
\textit{Power flow equations}: \;\;\;\;\;\; 0 & = A_a x_a + C_a g(x_d,x_a),
\end{split}
\end{align}
\end{subequations} 
where $x_d \in\mathbb{R}^{n_d}$ contains the dynamic states, $x_a \in\mathbb{R}^{n_a}$ contains the algebraic variables, $A_d$, $C_d$, $B$, $A_a$, and $C_a$ are constant system matrices with appropriate dimensions, $u \in \mathbb{R}^m$ contains the control inputs while the functions $f: \mathbb{R}^{n_d} \times \mathbb{R}^{n_a} \to \mathbb{R}^{n_{f}}$ and $g: \mathbb{R}^{n_d} \times \mathbb{R}^{n_a} \to \mathbb{R}^{n_{g}}$ (it is continuously differentiable) are the nonlinearities in dynamic and algebraic equations, respectively; $E_d = I_{n_d}$ holds and $w_0 \in \mathbb{R}$ is the synchronous speed; the vector $h$ satisfies $h \in \mathbb{R}^{n_d}$. In the above power system model, Eq.~\eqref{main_DAE_system_1} lumps the dynamic equations (set of ODEs) for all the synchronous generators while~\eqref{main_DAE_system_2} includes the power flow/balance equations (the algebraic constraints) of the power grid. Due to space limitations, the detailed description/explanation of the power system model is not included in this work and can be found in the power system literature, \emph{e.g.}, see the references \cite{nadeem2022dynamic, qi2018comparing, sauer2017power}. 

That being said, \emph{w.l.o.g.}, we let $f(0,0)=0$ and $g(0,0)=0$ and assume that the DAE~\eqref{eq:main_power_sys} is with index $1$ (this is a physical property of the power system~\eqref{eq:main_power_sys}), which means that the Jacobian $\frac{\partial \tilde{g}}{\partial x_a}$ is invertible for $\tilde{g}(x_d,x_a) :=  A_a x_a + C_a g(x_d,x_a)$. This implies that, by Implicit Function Theorem \cite{krantz2002implicit}, for the algebraic Eq.~\eqref{main_DAE_system_2}, there exists a unique solution: $x_a = \ell(x_d)$, where $\ell$ is a function. If we substitute $\ell(x_d)$ into $x_a$ in~\eqref{main_DAE_system_1}, we then get $E_d \dot{x}_d = \dot{x}_d = A_d x_d + C_d f(x_d,\ell(x_d)) +  B u + hw_0$, which is a standard NODE, and the corresponding system identification techniques for NODEs can be applied in this case.

Throughout the paper, we deal with the system identification problem for~\eqref{eq:main_power_sys}, specifically, the \textit{\textbf{black-box}} identification. To collect data sets for identifications, we also consider the output of the NDAE system~\eqref{eq:main_power_sys} as follows: 
\begin{align} \label{eq:output_data}
    y &= \begin{bmatrix}
    x_d^T & x_a^T
\end{bmatrix}^T. 
\end{align}
For system identification, the model~\eqref{eq:main_power_sys} is further assumed to be smoothly parametrized by a
constant unknown parameter vector $\theta \in \mathbb{R}^{n_\theta}$, which is to be estimated.  Notice that in the implementation of identification, the input $u(t)$ is known and given by the human operator, while the output $y(t)$ has to be sampled at each time instant denoted by $t_k$. Then, we have the following complete data collection: 
\begin{align*}
N_{\eta,T} &:= \{ (y(t_{k}), u(s)) \mid k=1,\dots,\eta, \; s \in [0,T] \},
\end{align*}
for some $T>0$. The main goal of this work is to estimate the parameter $\theta$ by using the data set $N_{\eta,T}$. To that end, we first propose an NN architecture for identifying the dynamic and algebraic models and then analyze the corresponding identification error between the true output and the predicted output generated by the proposed NN.  
%\section{NDAE Model} \label{sec:NDAE_model}
%This section...
\section{Proposed Method} \label{sec:pro_method}
This section presents the proposed method for identifying the dynamic and algebraic equations of the NDAE~\eqref{eq:main_power_sys}. For the ODE-modeled dynamic system, a DNN scheme is employed, while for the algebraic one, a standard feedforward neural network (FNN) is utilized. % First, the implicit Runge-Kutta approach is utilized to obtain data sets, which are used for training or testing the models. Then, we give the proposed architecture of NNs. Thirdly, the estimation method is shown to demonstrate the details of implementing identifications. Finally, an identification error analysis is performed to illustrate the performance of DNN in identifying the dynamic equation. 
\subsection{Implicit Runge-Kutta (IRK) method}
In the system identification of~\eqref{eq:main_power_sys}, we assume that the data of $\{x_d(t_k),x_a(t_k)\}_{k=1}^\eta$ is accessible to the human operator in discrete time. Also, numerical analysis methods can be used to approximate solutions of~\eqref{eq:main_power_sys}. In the current paper, we utilize the Implicit Runge-Kutta (IRK) method with $\nu$ stages expressed as follows:
\begin{subequations} \label{eq:Runge-Kutta}
\begin{align}
\begin{split} \label{Runge-Kutta_1}
  \alpha_j&= x_d^n + \Delta \sum_{i=1}^\nu b_{j,i} \tilde{f}\left(\alpha_i, \beta_i, u\right), \quad  0  = \tilde{g}(\alpha_j, \; \beta_j), \end{split}  \\
\begin{split}  \label{Runge-Kutta_3}
x_d^{n+1} &= x_d^n + \Delta \sum_{j=1}^\nu c_{j} \tilde{f} \left(\alpha_j, \beta_j,  u\right), ~\quad 0  = \tilde{g}(x_d^{n+1},\; x_a^{n+1}),
\end{split} 
\end{align}
\end{subequations} 
where $j = 1,\dots,\nu$; $\Delta>0$ is sufficiently small, $x_d^{n} :=x_d(t_{n})$, 
\begin{align*}
\alpha_j &:= x_d\left(t_n+ \Delta \sum_{k=1}^\nu b_{j,k}\right),  \quad  \beta_j := x_a\left(t_n+ \Delta \sum_{k=1}^\nu b_{j,k}\right),  \\ \tilde{f}(x_d,x_a,u) &:= A_d x_d + C_d f(x_d,x_a) +  B u + hw_0, \quad
 \tilde{g}(x_d,x_a) := A_a x_a + C_a g(x_d,x_a).
\end{align*}
Note that $\{b_{j,i},c_j\}$ are known parameters and since $E_d =  I_{n_d}$ holds, $E_d$ is omitted.  

For the IRK scheme~\eqref{eq:Runge-Kutta}, after parametrization (with the denotation $\theta$; for example, $\alpha_j$ is changed to $\alpha_j^\theta$), we can denote the dynamic states and algebraic variables by
\begin{align*}
[\alpha_1^\theta,\; \dots, \; \alpha_\nu^\theta, \; x_d^{n+1,\theta}],\quad
[\beta_1^\theta,\; \dots, \; \beta_\nu^\theta, \; x_a^{n+1,\theta}].
\end{align*}
These can be used in the training or testing process. 

\subsection{Architecture of neural networks}
\begin{figure}[h!]
\centering
\includegraphics[width=5.5in]{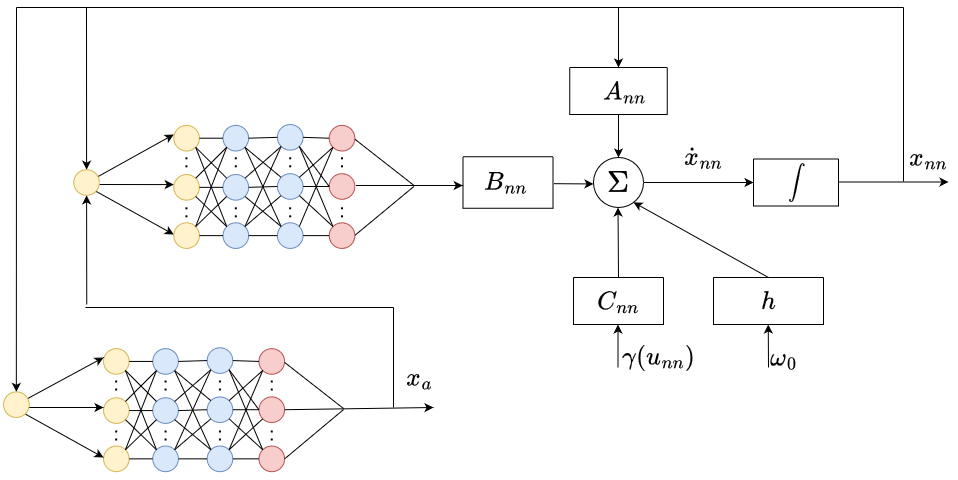}
\caption{A schematic design of the proposed system identification technique.}
\label{Fig:NN_architecture}
\end{figure}

In this section, we consider an architecture of NNs for identifying the DAE system~\eqref{eq:main_power_sys}, in which the DNN is for identifying the dynamic Eq.~\eqref{main_DAE_system_1} and another FNN is for approximating the algebraic constraint~\eqref{main_DAE_system_2}. This scheme is depicted in Fig.~\ref{Fig:NN_architecture}, in which the expression of the DNN is: $ \dot{x}_{nn} = A_{nn} x_{nn} + B_{nn} \hat{\rho}(x_{nn}) + h \omega_0 + C_{nn} \gamma(u_{nn})$, where the function $\hat{\rho}$ represents the upper NN; the matrices $A_{nn}, B_{nn}, C_{nn}$, the operator $\gamma$, the input $u_{nn}$ are defined as in \eqref{eq:DNN}, and $x_{nn}$ is the state of the DNN (it approximates $x_d$). Also, in the identification processes, the matrices or functions $A_{nn}, B_{nn}, C_{nn}, \gamma, u_{nn}, h, \omega_0$ are assumed to be known.

\subsection{Estimation method}
In the identification process, the two constraints have to be satisfied (\emph{i.e.}, the NNs are restricted to satisfy the differential and algebraic equations described by~\eqref{eq:Runge-Kutta}). At the beginning, we select the initial conditions of NNs in the whole considered domain (the initial conditions are randomly distributed). Also, the output training or testing data (\emph{i.e.}, $x_d^n, x_a^n$) are measured by the human operator so that we have two training sets of size $\eta>0$ for the two NNs (one for the dynamic system~\eqref{main_DAE_system_1}, and another one for the algebraic Eq.~\eqref{main_DAE_system_2}), respectively, that is
\begin{align*}
D_{\eta}^d := \{ x_d^{n,1},  \dots, x_d^{n,\eta} \},\quad D_{\eta}^a := \{ x_a^{n,1},  \dots, x_a^{n,\eta} \}.
\end{align*}
%These two data sets are obtained from the output data of~\eqref{eq:output_data}.

Then, to analyze the difference between the NNs and the IRK scheme dynamics~\eqref{eq:Runge-Kutta}, we consider the following loss function:
\begin{equation*}
    \mathcal{L}({\theta,\eta}) :=  l(\mathcal{L}_d({\theta,\eta}), \mathcal{L}_a({\theta,\eta})),
\end{equation*}
where $l:\mathbb{R}_{\geq 0} \times \mathbb{R}_{\geq 0} \to \mathbb{R}_{\geq 0}$ is an operator with $l(0,0)=0$, and the functions $l(s,\cdot), l(\cdot,s)$ increase as the variable $s \in \mathbb{R}_{\geq 0}$ increases, where $\cdot $ can be any fixed non-negative real number.

In this paper, $\mathcal{L}_d({\theta,\eta})$ and $\mathcal{L}_a({\theta,\eta})$ are specifically defined as
\begin{align*}
    \mathcal{L}_d({\theta,\eta}) &:=  \frac{1}{\eta (\nu+1) } \sum_{x_d^n \in D^d_{\eta}} \; \sum_{j=1}^{\nu+1} \left \| x_d^n - \hat{x}_d^{n,j}(\theta) \right\|^2, \\
\mathcal{L}_a({\theta,\eta}) &:=  \frac{1}{ \nu+1}  \tilde{\ell} \left(\sum_{j=1}^{\nu}  \| \tilde{g}(\alpha_j^\theta,\beta_j^\theta) \| + \left( \| \tilde{g}(x_d^{n+1,\theta},x_a^{n+1,\theta}) \| \right) \right), 
    \end{align*}
where 
\begin{align*}
& \hat{x}_d^{n,j}(\theta) := \alpha_j^\theta - h\sum_{i=1}^\nu b_{j,i} \tilde{f}\left(\alpha_i^\theta, \beta_i^\theta, u \right), \quad 
\hat{x}_d^{n,\nu+1}(\theta) := x_d^{n+1,\theta} - h \sum_{j=1}^\nu c_{j} \tilde{f}(\alpha_j^\theta, \beta_j^\theta, u), 
\end{align*}
and the function $\tilde{\ell}(s)$ increases as the variable $s \in \mathbb{R}_{\geq 0}$ increases. 

Next, we use the two NNs (as depicted in Fig.~\ref{Fig:NN_architecture}) to identify the dynamic and algebraic equations. We train the parameters of the considered NNs by minimizing the loss function using a gradient-based optimizer \cite{daoud2023gradient,dogo2018comparative}, which can be expressed as
\begin{align*}
\hat{\theta} &= \mathop{\arg  \min_{\theta} } \; \mathcal{L} ({\theta,\eta}). 
\end{align*}
Note that the selection of the explicit form of the operator $\mathcal{L}$ can be tricky since there are two main constraints originating from the DAE system~\eqref{eq:main_power_sys}, \emph{i.e.}, the solution of~\eqref{eq:main_power_sys} should satisfy the dynamic system~\eqref{main_DAE_system_1} and stay within the manifold defined as follows: 
\begin{align*}
M &:= \{ (x_d,x_a) \mid \tilde{g}(x_d,x_a) = 0 \}, 
\end{align*}
which is deduced from~\eqref{main_DAE_system_2}. For that purpose, we select a recurrent penalty method in the sequel to get a well-conditioned loss function. Also, for simplicity, we select the loss function as 
\begin{align*}
    \mathcal{L}({\theta,\eta}) &=  l(\mathcal{L}_d({\theta,\eta}), \mathcal{L}_a({\theta,\eta})) := w_d \mathcal{L}_d({\theta,\eta}) + w_a \mathcal{L}_a({\theta,\eta}).
\end{align*}
It is noteworthy that we employ an existing algorithm in the literature (Algorithm $2.1$ in \cite{lu2021physics} or Algorithm $1$ in \cite{moya2023dae}) for determining the values of $w_d, w_a >0$. Due to space limitations, we are unable to provide the details of such an algorithm in the current paper. 
%\begin{algorithm} \caption{A penalty method for determining loss function }\label{alg:penalty_method} \begin{algorithmic}[1] \State Initiate the values of $w_d^0, w_a^0$, factor $\xi$, and iteration number $K$ \State $k \leftarrow 0$, and  $\theta^0 \leftarrow \mathop{\arg \; \min_{\theta} }  \mathcal{L}^0({\theta,\eta})$ \State ... \end{algorithmic} \end{algorithm}

\subsection{Identification error analysis of DNN}
In this section, we consider the following structure of general DNN (the DNN shown in Fig.~\ref{Fig:NN_architecture} is a special case) for identifying~\eqref{main_DAE_system_1}:
\begin{align} \label{eq:DNN}
    \dot{x}_{nn} &= A_{nn} x_{nn} + B_{nn} \rho(x_{nn}) + C_{nn} \gamma(u_{nn}), 
\end{align}
where $A_{nn} \in \mathbb{R}^{n \times n} $ is the state matrix of the DNN; $B_{nn} \in \mathbb{R}^{n \times n_B}, C_{nn} \in \mathbb{R}^{n \times n_C} $ are also matrices; the function $\rho: \mathbb{R}^{n} \to \mathbb{R}^{n_B}$ is nonlinear, the operator $\gamma: \mathbb{R}^{n_u} \to \mathbb{R}^{n_C}$, and the input $u_{nn} := u$ (thus, $n_u = m$), here $u$ contains the known inputs given in~\eqref{eq:main_power_sys}. The subscript $``nn"$ represents the NN. Then, we analyze the identification error after the NN
parameters have been trained using the previous
learning method. 

In the identification, one can assume the system state can be measured, and the dimensions of the input and states (\emph{i.e.}, $x_d, x_a$) of the true system are equal to those of the DNN. Under these mild assumptions, we then study the dynamic behavior of the error 
\begin{align*}
   e &:= x_d - x_{nn}. 
\end{align*}
The resulting error dynamics can be expressed as follows:
\begin{align} \label{eq:error_system}
    \dot{e} &= Ae + \tilde{\phi}(e,x_d,x_a,u,t), 
\end{align}
where $A$ is a Hurwitz matrix, the function $\tilde{\phi}(e,x_d,x_a,u,t) := A_d x_d + C_d f(x_d,x_a) +  B u + hw_0 -  A_{nn} x_{nn} - B_{nn} \rho(x_{nn}) - C_{nn} \gamma(u_{nn}) - Ae$. In the expression of $\tilde{\phi}(e,x_d,x_a,u,t)$, we have the variable $x_a$, which is estimated by the previously-proposed FNN (see Fig.~\ref{Fig:NN_architecture} for clarification). Furthermore, there exists a function $\ell$ such that $x_a = \ell(x_d)$ holds under the condition that the function $\tilde{g} $ is continuously differentiable and its Jacobian is invertible (or the DAE system~\eqref{eq:main_power_sys} is with index $1$). Then, the variable $x_a$ in $\tilde{\phi}(e,x_d,x_a,u,t)$ can be substituted by $\ell(x_d)$, thus obtaining 
\begin{align}\label{eq:error_system_replace}
\phi(e,x_d,u,t) := ~& A_d x_d + C_d f(x_d,\ell(x_d)) +  B u + hw_0 \nonumber \\
& -A_{nn} x_d - (A-A_{nn})e - B_{nn} \rho(x_d-e) - C_{nn} \gamma(u).
\end{align}
For the system~\eqref{eq:error_system_replace}, several assumptions are imposed in this work to guarantee that the error $e$ is bounded. We first introduce the following assumption on the nonlinearity $\phi$: 
\begin{asmp} \label{assum_1}
    There exist matrices $L \succ 0$ and $K \succ 0$ such that the following inequality:
    \begin{align*}
        & \phi(e,x_d,u,t)^\top  L \phi(e,x_d,u,t) \leq c_0(x_d,u,t)  + c_1(x_d,u,t) e^\top K e,
    \end{align*}
    holds where the real numbers $c_0, c_1 >0$ are bounded by
    \begin{align*}
        & \sup \; c_0(x_d,u,t) = c^0 < +\infty, \quad \sup \; c_1(x_d,u,t) = c^1 < +\infty,  \\ 
    & \forall x_d \in C^1_{n_d}(\mathbb{R}_{\geq 0}), \;
    u \in C^1_m(\mathbb{R}_{\geq 0}), \; t \geq 0. 
    \end{align*}
\end{asmp}

Note that Assumption~\ref{assum_1} is non-restrictive, since the right-hand side terms of~\eqref{eq:error_system_replace} can respectively be upper bounded by
    \begin{align*}
& (A_d -  A_{nn}) x_d + C_d f(x_d,\ell(x_d)) +  B u + hw_0- B_{nn} \rho(x_d) - C_{nn} \gamma(u)  \leq \; c_0(x_d,u,t), \\
&- (A-A_{nn})e + \left(B_{nn} \rho(x_d)- B_{nn} \rho(x_d-e)  \right) \leq \; c_1(x_d,u,t) e^\top K e, \\ 
& \forall x_d \in C^1_{n_d}(\mathbb{R}_{\geq 0}), \;
   u \in C^1_m(\mathbb{R}_{\geq 0}),  \; t \geq 0, 
\end{align*}
    under the conditions that $\rho$ is a Lipschitz function, the norm of $\gamma(t)$ at $t \geq 0$ is finite, and $f(s(t),\ell(s(t)) $ $< +\infty$ holds for $\| s \|_{\mathbb{R}} < +\infty$. Therefore, we then impose the second assumption on the nonlinearity $\rho$ and the operators $\gamma, f$ as follows:  
\begin{asmp} \label{assum_2}
    The nonlinearity $\rho$  is a Lipschitz function, the operator $\gamma$ satisfies $\| \gamma \|_{\mathbb{R}} < +\infty$, and $f(s(t),\ell(s(t)) < +\infty$ holds for $\| s \|_{\mathbb{R}} < +\infty$. 
\end{asmp}
The condition on the norm $\| \cdot \|_{\mathbb{R}}$ is more restrictive than $\| \cdot \|_{[r^1,r^2]}$, where $r^1:= \inf_{t \geq 0}  r(t), \; r^2:= \sup_{t \geq 0}   r(t)$, where $r(t)$ is a scalar-valued function. The latter case can also be considered for further relaxation for the system runs on the finite time interval $[r^1,r^2]$. 

We are now ready to give the third assumption: 
\begin{asmp} \label{assum_3}
    Assume that there exist matrices $P = P^\top \succ 0, W = W^\top \succ 0$ such that for the system~\eqref{eq:error_system}, the following matrix inequality:
  \[
  A^\top P+ PA + P L^{-1} P + c^1 K  +  W \preceq 0, 
  \]
  has a solution $(P,W)$, where $L,c^1,K$ can be found in Assumption~\ref{assum_1}.
\end{asmp}
Assumptions~\ref{assum_1}-\ref{assum_3} are used in the proof of the following proposition: 
\begin{myprs} \label{thm:main}
Let Assumptions~\ref{assum_1}-\ref{assum_3} be satisfied. For the NDAE system~\eqref{eq:main_power_sys} and the DNN~\eqref{eq:DNN}, the error $e(t)$ satisfies the following property: 
\begin{align*}
& \lim_{t \to +\infty} \sup \| e(t) \| \leq \sqrt{\frac{c^0}{\lambda_{\min}(P) \lambda_{\min}(P^{-\frac{1}{2}} W P^{-\frac{1}{2}}) }}. 
\end{align*}
where $P^{-\frac{1}{2}}$ denotes the principal square root of $P^{-1} \succ 0$.
\end{myprs}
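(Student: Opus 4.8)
The plan is to run a standard Lyapunov-based ultimate-boundedness argument on the error system~\eqref{eq:error_system}, using the quadratic Lyapunov function $V(e) := e^\top P e$ with the matrix $P$ supplied by Assumption~\ref{assum_3}. First I would differentiate $V$ along the trajectories of~\eqref{eq:error_system}, obtaining $\dot V = e^\top (A^\top P + PA) e + 2 e^\top P \phi$, where I abbreviate $\phi := \phi(e,x_d,u,t)$. The only term not already in quadratic form is the cross term $2 e^\top P \phi$, and the key manipulation is to absorb it by a completing-the-square (Young) inequality tuned to the weight $L$ from Assumption~\ref{assum_1}: for any $L \succ 0$ one has $2 e^\top P \phi \leq e^\top P L^{-1} P e + \phi^\top L \phi$. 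This is exactly the step that explains why the term $P L^{-1} P$ appears in the matrix inequality of Assumption~\ref{assum_3}.

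Next I would invoke Assumption~\ref{assum_1} to replace $\phi^\top L \phi$ by $c_0 + c_1 e^\top K e$ and then bound $c_0 \leq c^0$ and $c_1 \leq c^1$ (the latter using $e^\top K e \geq 0$ since $K \succ 0$). Collecting terms gives $\dot V \leq e^\top \big(A^\top P + PA + P L^{-1} P + c^1 K\big) e + c^0$, and Assumption~\ref{assum_3} bounds the bracketed matrix above by $-W$, whence $\dot V \leq - e^\top W e + c^0$. To turn this into a decay estimate on $V$ itself, I would perform the change of variables $z := P^{1/2} e$, so that $e^\top W e = z^\top (P^{-1/2} W P^{-1/2}) z \geq \lambda_{\min}(P^{-1/2} W P^{-1/2})\, e^\top P e$; writing $\mu := \lambda_{\min}(P^{-1/2} W P^{-1/2}) > 0$ yields the scalar differential inequality $\dot V \leq -\mu V + c^0$.

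Finally I would close the argument with the comparison lemma: integrating the linear inequality gives $V(t) \leq V(0)\, e^{-\mu t} + (c^0/\mu)(1 - e^{-\mu t})$, so $\limsup_{t\to+\infty} V(t) \leq c^0/\mu$. Since $V(e) \geq \lambda_{\min}(P)\,\|e\|^2$, this translates into $\limsup_{t\to+\infty}\|e(t)\|^2 \leq c^0/(\lambda_{\min}(P)\,\mu)$, and taking square roots recovers the stated bound.

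I would expect the only genuinely delicate point to be the bookkeeping around the matrix-weighted norms: matching the completing-the-square term precisely to $P L^{-1} P$, and correctly extracting the factor $\lambda_{\min}(P^{-1/2} W P^{-1/2})$ through the $z = P^{1/2} e$ substitution rather than a cruder eigenvalue bound. Everything else---the differentiation, the supremum bounds from Assumption~\ref{assum_1}, and the comparison-lemma integration---is routine. One should also verify that trajectories remain defined for all $t \geq 0$, which follows from $A$ being Hurwitz together with the growth control on $\phi$ implied by Assumption~\ref{assum_2}.
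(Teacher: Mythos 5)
Your proposal is correct and follows essentially the same route as the paper's proof: the quadratic Lyapunov function $V(e)=e^\top P e$, the Young-type bound $2e^\top P\phi \leq \phi^\top L\phi + e^\top P L^{-1} P e$, Assumptions~\ref{assum_1} and~\ref{assum_3} to reach $\dot V \leq -e^\top W e + c^0$, the factorization $e^\top W e = (P^{1/2}e)^\top (P^{-1/2} W P^{-1/2})(P^{1/2}e) \geq \lambda_{\min}(P^{-\frac{1}{2}} W P^{-\frac{1}{2}}) V(e)$, and the comparison-lemma integration. The only (harmless) additions are your explicit remarks on forward completeness of trajectories and on $\mu>0$, which the paper handles implicitly via $W \succ 0$.
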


\begin{proof}
Consider the Lyapunov function
\begin{align*}
& V(e) = e^\top P e, \; \text{where} \; P = P^\top \succ 0. 
\end{align*}
Taking the time derivative of $V(e)$, we have
    \begin{align*}
& \dot{V}(e) = 2 e^\top P \dot{e} = 2 e^\top P \Big( Ae + \phi(e,x_d,u,t) \Big) = e^\top (A^\top P + PA)e + 2 e^\top P \phi .
  \end{align*}
Here, for the term $2 e^\top P \phi$, by Matrix Young Inequalities \cite{ando1995matrix}, we derive that for any $L = L^\top \succ 0$ (\emph{w.l.o.g.}, we select the matrix $L$ in Assumption~\ref{assum_1}), the following equality:
\begin{align*}
& 2 e^\top P \phi = (e^\top P) \phi + \phi^\top (P e) \leq \phi^\top L \phi+ e^\top P L^{-1} P e, 
\end{align*}
holds. Applying Assumption~\ref{assum_1}, we get 
\begin{align*}
& 2 e^\top P \phi \leq c_0(x_d,u,t)  + c_1(x_d,u,t) e^\top K e + e^\top P L^{-1} P e. 
\end{align*}
   Therefore, it can be deduced that
    \begin{align*}
\dot{V}(e) \leq ~& e^\top (A^\top P + PA)e + c_0(x_d,u,t)  + c_1(x_d,u,t) e^\top K e + e^\top P L^{-1} P e \\
\leq ~& e^\top \Big( A^\top P + PA + P L^{-1} P + c^1 K \Big) e +  c^0.
  \end{align*}
 Then, by Assumption~\ref{assum_3}, we have the following inequality: 
    \begin{align*}
\dot{V}(e)  \leq & - e^\top W e + c^0 
 \leq  - e^\top P^{\frac{1}{2}} \Big( P^{-\frac{1}{2}} W P^{-\frac{1}{2}} \Big) P^{\frac{1}{2}} e + c^0
 \leq  -\lambda_{\min}(P^{-\frac{1}{2}} W P^{-\frac{1}{2}})  V(e) + c^0. 
  \end{align*}
Let $\tilde{W}:= P^{-\frac{1}{2}} W P^{-\frac{1}{2}}$. We solve the above inequality and obtain  
    \begin{align} \label{eq:final_ineq}
\lambda_{\min}(P) \| e(t) \|^2  \leq ~& V(0) e^{-\lambda_{\min}(\tilde{W}) t} + \frac{c^0}{\lambda_{\min}(\tilde{W})} e^{-\lambda_{\min}(\tilde{W}) t} \Big( e^{\lambda_{\min}(\tilde{W}) t} -1 \Big)
 \leq \frac{c^0}{\lambda_{\min}(\tilde{W})},
  \end{align}
due to $0< e^{-\lambda_{\min}(\tilde{W}) t} \Big( e^{\lambda_{\min}(\tilde{W}) t} -1 \Big) \leq 1$ (observe that $\lambda_{\min}(\tilde{W}) \ge 0$ holds as $-\tilde{W} \preceq 0$ is satisfied.). The inequality~\eqref{eq:final_ineq} results in the property in Proposition~\ref{thm:main}. This completes the proof. 
\end{proof}

\begin{remark}
  It is noteworthy that in \cite{poznyak1995nonlinear}, the identification error analysis of the same DNN is considered. However, it imposes more restrictive conditions (or assumptions) on the DNN; for instance, in the current paper, we assume that $ x_d \in C^1_{n_d}(\mathbb{R}_{\geq 0})$ holds, so that the condition: $f(x_d, \ell({x_d}))$ $< +\infty$ is milder since $f(x_d, \ell({x_d})) < +\infty$ for all $x_d \in \mathbb{R}^{n_d}$ greatly limits the scope of the operator $f$, while $ x_d \in C^1_{n_d}(\mathbb{R}_{\geq 0})$ is also more consistent with the physical properties of the DNN, compared with $x_d \in \mathbb{R}^{n_d}$.
\end{remark}

\section{Case Study} \label{sec:Example}
In this section, we conduct system identification simulations for the power system modeled by the NDAE~\eqref{eq:main_power_sys}. In particular, we consider the IEEE 9-bus test system which is most commonly used in various power system studies \cite{nadeem2022dynamic}. To ensure accurate and efficient identifications, we use the numerical simulation environment composed of Windows $11$, MATLAB R$2023$a, and a $12$th Gen Intel Core i$9$-$12900$H processor with $64$GB RAM. 

In the simulations, we use the MATLAB index $1$ DAE solver \texttt{ode15i} to simulate NDAE models, which is configured with the following parameter values: relative tolerance = $10^{-5}$, absolute tolerance = $10^{-6}$, and a maximum step size of $10^{-3}$. 

\begin{figure}[htb!]
  \centering
  \hspace{0cm}
  \subfigure[The trajectory of relative error $e_a^r(t)$.]{\label{fig:fnn}\includegraphics[width=0.472\textwidth]{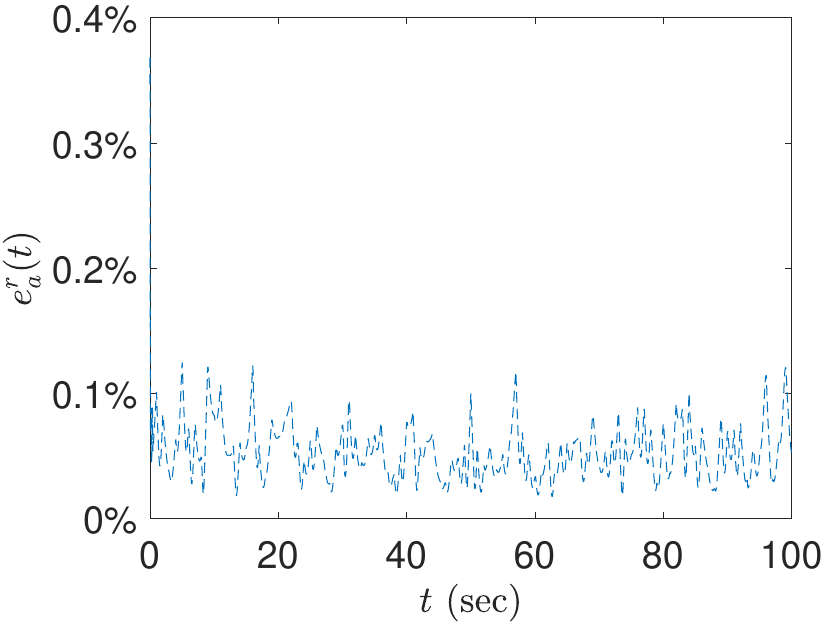}} \hspace{0cm}  \subfigure[The trajectory of relative error $e_d^r(t)$.]{\label{fig:dnn}\includegraphics[width=0.472\textwidth]{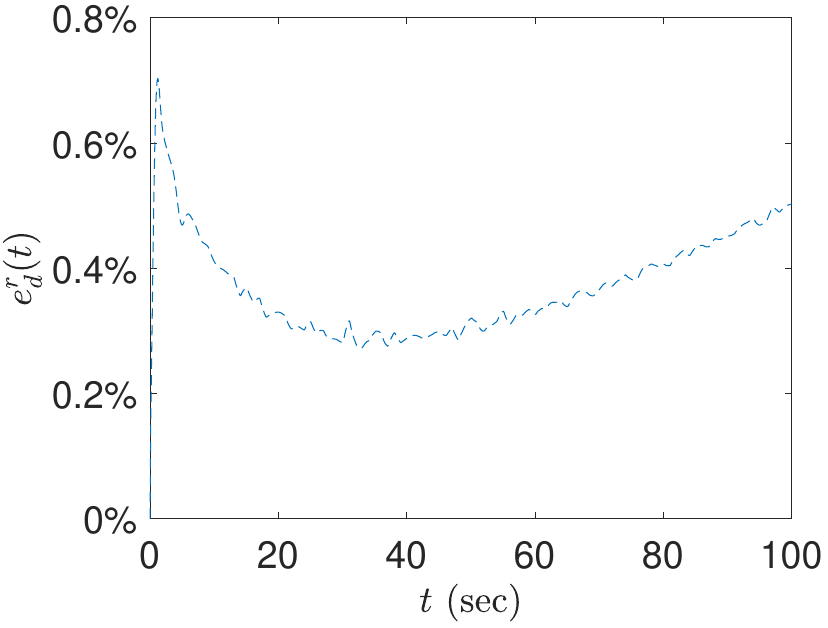}}  \\ \hspace{0.55cm}
  \subfigure[The comparison of $x_{d,4}$ and $\hat{x}_{d,4}$ (in the case of minimum identification error).]{\label{fig:DNN_4}\includegraphics[width=0.442\textwidth]{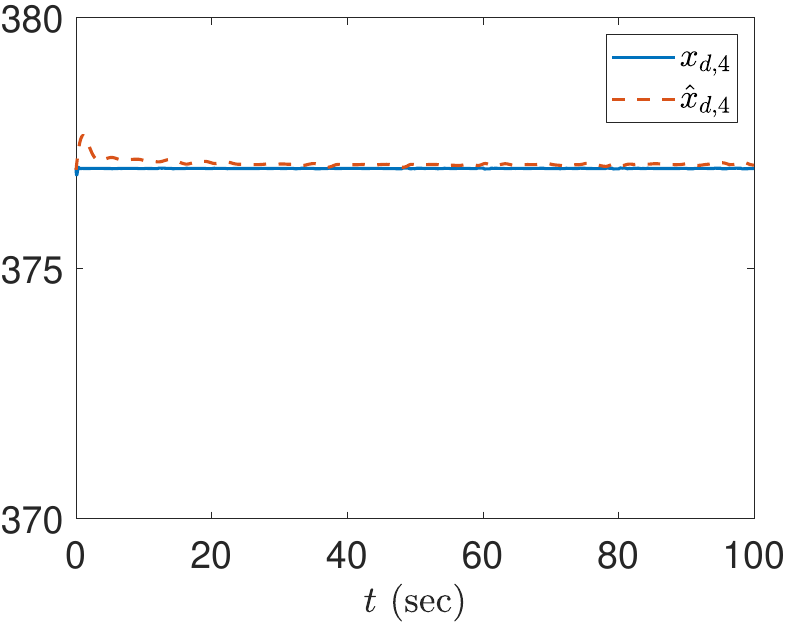}} \hspace{0.55cm}
 \subfigure[The comparison of $x_{d,5}$ and $\hat{x}_{d,5}$ (in the case of maximum identification error).]{\label{fig:DNN_5}\includegraphics[width=0.442\textwidth]{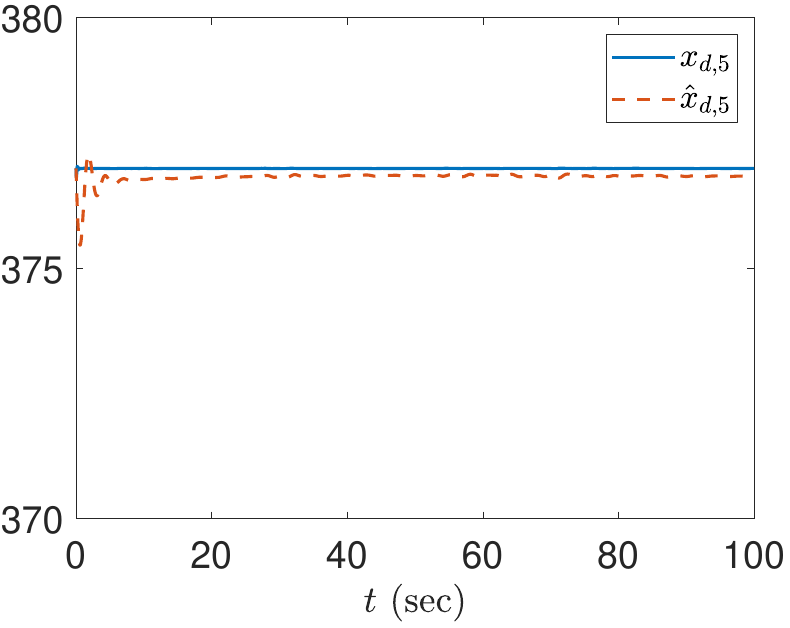}}
\caption{The identification results of the proposed NN-based system identification technique.}
\label{fig:I}
\end{figure}

In the model of~\eqref{eq:main_power_sys}, the variables $x_d,x_a$ are in the spaces $\mathbb{R}^{12}, \mathbb{R}^{24}$, respectively. We randomly select the initial conditions for~\eqref{eq:main_power_sys} and measure the output $y^1$ as defined in~\eqref{eq:output_data}, under a given input $u$ (here, we assume that none of the parameters of~\eqref{eq:main_power_sys} is known).  From $y^1$, we then obtain the data set of $\{x_d^{n,i}, x_a^{n,i}\}_{i=0}^\eta$, where the pairs $\{x_d^{n,i},x_a^{n,i} \}$ are again randomly selected and close to each other. 

%Second, after identifying the NDAE power system~\eqref{eq:main_power_sys}, in the running of the system, we only measure part of variables of $y^1$, that is,  \[ y^2 = \mathcal{T}(y^1), \] where the operator $\mathcal{T}$ extracts part of variables of $y^1$. 
First, an FNN is utilized (the relevant MATLAB built-in function is utilized) to approximate the following algebraic expression:
\begin{align*}
x_a & = \hat{\ell}(x_d),  
\end{align*}
where the explicit form of the function $\hat{\ell}$ is to be estimated. The corresponding simulation results of the identification are depicted by Fig.~\ref{fig:fnn}, in which the relative error $e_a^r(t)$ is defined as follows: 
\begin{align*}
e_a^r(t)~(\%) := 100 \times \frac{\| x_a(t) - \hat{x}_a(t) \| }{ \| x_a(t) \|} \Big(\text{respectively for $x_d$:} \; e_d^r(t)~(\%) := 100 \times \frac{\| x_d(t) - \hat{x}_d(t) \| }{ \| x_d(t) \|} \Big),  
\end{align*}
where $\hat{x}_a$ represents the predicted value of $x_a$ in the identified model of \eqref{main_DAE_system_2}.  

The second step is to identify the dynamic Eq.~\eqref{main_DAE_system_1}. Similarly, $\hat{x}_d$ represents the predicted value of $x_d$ in the identified model of \eqref{main_DAE_system_1}. In this experiment, we have the following dynamic equation:
\begin{align*}
\dot{x}_d &= A_d x_d + C_d f(x_d,\hat{\ell}(x_d)) +  B u + hw_0,
\end{align*}
to be identified. To that end, we use the proposed NN structure (see {Fig.~\ref{Fig:NN_architecture}, in which the output of the FNN is delivered to the inputs of the DNN). We implement the identification of~\eqref{main_DAE_system_1} and obtain the identification errors. We visualize the trajectory of relative error $e^r_d$ in Fig.~\ref{fig:dnn}. It illustrates that the identification error remains small. Furthermore, for brevity, we only visualize the minimum and maximum identification errors, which are $4^{\text{th}}$ and $5^{\text{th}}$ variables of $x_d$ (denoted by $x_{d,4}$ and $x_{d,5}$), respectively. The trajectories of $x_{d,4}$, $\hat{x}_{d,4}$ are shown in Fig.~\ref{fig:DNN_4}, by which we see that the error between the true output $x_{d,4}$ and its predicted value $\hat{x}_{d,4}$ is also small. Moreover, the trajectories of $x_{d,5}$, $\hat{x}_{d,5}$ are presented in Fig.~\ref{fig:DNN_5}, and one can realize that although the corresponding error is larger, it still illustrates the good performance of the proposed DNN. The above simulation results corroborate that our proposed NN scheme has satisfactory performance in the identification of the NDAE-modeled power system~\eqref{eq:main_power_sys}.

% To study the performance of the MOR methodologies in more detail, in the sequel, the \emph{root mean square error (RMSE)} \eqref{eq:RMSE} between actual and recovered states is also used, which can be expressed as follows:   \begin{gather} \label{eq:RMSE} \text{RMSE} := \sum_{i=1}^n  \sqrt{\frac{1}{T} \sum_{t=0}^T (x_i(t) - \hat{x}_i(t))^2}, \end{gather} where $\hat{x} = Vz$ is the recovered state vector. 

\section{Conclusion} \label{sec:conclusion}
This paper introduces an NN framework designed to effectively identify and simulate solution trajectories of NDAEs, such as those characterizing the dynamics of power networks known for exhibiting infinite stiffness. The proposed framework capitalizes on the synergy between IRK time-stepping schemes tailored for NDAEs and NNs (with a DNN). The framework compels an NN to adhere to the algebraic equation of NDAEs as hard constraints and its DNN component is suitable for simulating the ODE-modeled dynamic equation of NDAEs, using a penalty-based method. The paper showcases the efficacy of the proposed NN via the identification and simulation of solution trajectories for the considered DAE-modeled power system. Future work will consider the effect of disturbances that
are present in models, identifying power systems with partially unknown parameters and extending the proposed results to advanced power networks.

\acks{This work is supported by the National Science Foundation under
Grants 2152450 and 2151571.}

\bibliography{ieeeconf.bib}

\end{document}